\DeclareMathOperator{\tr}{tr}
\newtheorem{proposition}{Proposition}
\newtheorem*{proposition*}{Proposition}
\newtheorem{corollary}{Corollary}
\newtheorem{definition}{Definition}
\newtheorem{example}{Example}
\title{Superdense Coding and Stabiliser Codes with Ising-coupled Entanglement}
\author{Abel Jansma$^{1, 2}$}
\date{
    \textit{\footnotesize  $^1$Max Planck Institute for Mathematics in the Sciences, Leipzig, Germany \\
     $^2$Quantum Informatics, School of Informatics, University of Edinburgh, UK
    }\\[2ex]%
    \today}
\begin{document}

\maketitle

\begin{abstract}
A new class of quantum states is introduced by demanding that the computational measurement statistics approach the Boltzmann distribution of higher-order strongly coupled Ising models. The states, referred to as $n$-coupled states, are superpositions of even or odd parity $n$-qubit states, generalize Bell states, and form an orthonormal basis for the $n$-qubit Hilbert space. For any $n$, the states are maximally connected and locally maximally entangled. It is proven that the $n$-qubit W and GHZ multipartite entanglement classes have vanishing hyperdeterminant for all $n\geq 3$ and $n\geq 4$, respectively, and that the $n$-coupled states fall in the latter. Still, multiple novel protocols for multi-party secure dense coding and stabiliser code construction are presented, which rely on the structure of $n$-coupled states as well as symmetry-breaking phase perturbations.
\end{abstract}

\section{Introduction}

\subsection{Multipartite entanglement}
Many of the potentially useful quantum protocols consume entanglement. For example, quantum error correcting codes, quantum teleportation, superdense coding, and quantum communication all use up entanglement to achieve their goals. For this reason, entanglement is often considered a resource \cite{wootters1998quantum, Bouwmeester2000}, so entangled states are an essential ingredient of quantum protocols. This has inspired a search for quantum states with different kinds of entanglement. For $2$ qubits, the Bell basis states are maximally entangled and of central importance in virtually all quantum information protocols. However, it is not straightforward to generalise maximal entanglement for $n>2$ qubit systems. In fact, the catagolue of \textit{absolutely maximally entangled} (AME) states \cite{helwig2013absolutely, AMEstates} has shown that maximally entangled $n$-qubit systems (in the sense that the reduced density operator is proportional to the identity) only exist for $n \in \{2, 3, 5, 6\}$. Still, certain $n$-qubit states are used throughout quantum information science, where often it is not the total amount of entanglement that is relevant, but rather how the entanglement is distributed throughout the system \cite{jozsa2003role,gross2009most}. Entangled $n$-qubit states are also referred to as \textit{multipartite} entangled states, and it has been shown that there are several classes of multipartite entangled states that cannot be transformed into each other using just (stochastic) local operations and classical communication (SLOCC), so they are entangled in fundamentally different ways \cite{dur2000three,verstraete2002four,li2007slocc}. Up to four qubits, the full spectrum of SLOCC-invariant entanglement classes has been determined, but not much is known beyond that. One reason is that it is currently not known how to calculate the canonical SLOCC-invariant, the hyperdeterminant, for more than four qubits. 

Still, this has not hindered the usefulness of multipartite entangled states, as such states are used in a variety of quantum information protocols. For example, $n$-qubit GHZ and graph states are multipartite entangled states used in many quantum error-correcting codes \cite{hein2004multiparty,hein2006entanglement,schlingemann2001stabilizer,bell2014experimental}. Cluster states, a special kind of graph state, are provably different from the GHZ \cite{greenberger1990bell} and W states \cite{dur2000three}, and form the basis of measurement-based quantum computation \cite{raussendorf2001one,raussendorf2003measurement}.

\subsection{Contribution: $n$-coupled states}
The main contribution of this manuscript is the introduction of a new class of entangled $n$-qubit states, referred to as \textit{$n$-coupled states}. The states are maximally connected, locally maximally entangled, and globally maximally \textit{encoupled}, that is, any partition of the system leaves it in a maximally mixed state over other $n'$-coupled states. The $n$-coupled states are SLOCC-equivalent to GHZ states, so do not form a new multipartite entanglement class themselves. However, by introducing symmetry-breaking and nonlocal phases into the states that preserve the measurement statistics, both new and well-known protocols for secure dense coding and stabiliser codes are derived.

In addition, while it is not known how to calculate the hyperdeterminant of arbitrary $n$-qubit states for $n>4$, we present a proof that the hyperdeterminant of the W and GHZ states vanishes for all $n\geq 3$ and $n\geq 4$, respectively.

\subsection{Motivation: Classical Ising couplings}
Consider modelling the measurement statistics over two qubits as a classical Ising model. Restricting to measurements in the computational basis, the joint probability of measuring the outcomes $a$ and $b$ is given by the Boltzmann distribution:
\begin{equation}
    P(A=a, B=b) = \mathcal{Z}^{-1} \exp{\left(-J_{AB} ab\right)}
\end{equation}
where $\mathcal{Z}$ is the usual normalisation. In previous work, it has been shown that individual coupling parameters $J_{ij}$ of an arbitrary Ising model can be written in terms of conditional log-odds ratios, or sums of joint surprisals \cite{beentjes2020higher,jansma2023higher}. For example, a 2-point coupling $J_{ij}$ among two spins $\{i, j\} \subseteq S$ can be written as:
\begin{align}
    J_{ij} &= -\log \frac{p_{11} p_{00}}{p_{01} p_{10}}\\
    \intertext{where $p_{ab}=P(i=a, j=b \mid S \setminus \{i, j\}=0)$. If the probabilities correspond to the measurement statistics of a system in a state $\ket{\psi}=\alpha_{00} \ket{00} + \alpha_{01} \ket{01} + \alpha_{10} \ket{10} + \alpha_{11} \ket{11}$, then the 2-point coupling $J_{AB}$ can be written directly in terms of the amplitudes:}
    J_{AB} &= -\log \frac{|\alpha_{00}|^2 |\alpha_{11}|^2}{|\alpha_{01}|^2|\alpha_{10}|^2}
\end{align}

This implies that when the quantum state $\ket{\psi}$ is entangled, the corresponding classical model contains nonzero Ising interactions, with the maximally entangled Bell states corresponding to the limit of strong coupling. This view leads one to consider quantum states with measurement statistics corresponding to other Ising-like models. Using similar shorthand notation as above, the 3-point coupling $J_{ijk}$ among three spins $\{i, j, k\} \subseteq S$ can be written as:
\begin{align}
    J_{ijk} = -\log \frac{p_{111} p_{001} p_{010} p_{100}}{p_{000}p_{110} p_{101} p_{011}}
\end{align}
In the limit of strongly negative coupling, a sufficient (but not necessary) condition for this to be nonzero is that the distribution $p$ has support only in the numerator (or only in the denominator for positive coupling). Note that this is similar to the construction of Ising-like logic gates in \cite{jansma2023higher}. In a system with 3 variables, a 3-point interaction can thus be associated to a probability distribution $p$ that only assigns nonzero probability to the states $111, 001, 010$, and $100$. A quantum state that exactly encodes this distribution is $\ket{\psi} = \frac{1}{2}(\ket{111} + \ket{001} + \ket{010} + \ket{100})$. Note that there are many states with the same measurement statistics---the one shown being maximally symmetric. Note also that while the measurements follow Ising-statistics, this does not imply that the states are eigenstates of some quantum Ising Hamiltonian---the Ising interpretation of the measurement correlations is only used to motivate the definition of $n$-coupled states, but none of the results that follow depend on this physical interpretation. Still, it will be shown that the $n$-coupled states are SLOCC-equivalent to GHZ states, which are ground states of a quantum Ising model with pairwise interactions.

Finally, it should be noted that the quantum advantage of parity-sorted states has been studied before in the context of quantum pseudo-telepathy games \cite{brassard2004recasting}. In this manuscript we further explore the structure of these states, and the new protocols that can be constructed from them. Since the $n$-coupled states are SLOCC-equivalent to GHZ states, \textit{a priori} no advantage should be expected beyond what is possible with GHZ states, but it will be shown in Section \ref{sec:applications} that these $n$-coupled states (and symmetry-breaking perturbations) suggest new protocols for secure dense coding and stabiliser codes.

\section{Definition}
The $n$-coupled states are simply superpositions of states with either an even or odd number of $\ket{1}$s. However, here we define them more precisely to show that this definition has some nice properties. 

Let $S^n_e = \left\{ s | s \in \{0, 1\}^{\otimes n} \land (\sum_{i=1}^n s_i) \bmod 2 = 0 \right\}$, that is, $S^n_e$ is the set of binary strings of length $n$ with an even number of ones. Let $S^n_o = \{0, 1\}^{\otimes n} \setminus S^n_e$, i.e. the strings of length $n$ with an odd number of ones. As a regular expression, $S^n_e$ corresponds to $\mathcal{L}_e \cap \{0, 1\}^{\otimes n}$, where the language $\mathcal{L}_e = 0*(10*10*)*$. Similarly, $S^n_o$ corresponds to $\mathcal{L}_o \cap \{0, 1\}^{\otimes n}$, where the language $\mathcal{L}_o = 0*(10*10*)*10*$.

\begin{definition}[$n$-coupled state\label{def:ncoupled_state}]
    The even $n$-coupled state on $n$ qubits is:
    \begin{equation}
        \ket{\psi_n^+} = \frac{\sqrt{2}}{2^{n/2}} \left(\sum_{s \in S^n_e} \ket{s}\right)
    \end{equation}
    The odd $n$-coupled state on $n$ qubits is:
    \begin{equation}
        \ket{\psi_n^-} = \frac{\sqrt{2}}{2^{n/2}} \left(\sum_{s \in S^n_o} \ket{s}\right)
    \end{equation}

\end{definition}

That is, $\ket{\psi_n^+}$ is an equal superposition of all states with even parity, whereas $\ket{\psi_n^-}$ is an equal superposition of all states with odd parity. For even $n$, $\ket{\psi_n^+}$ corresponds to a strongly positive Ising coupling in the measurement statistics, while for odd $n$ $\ket{\psi_n^+}$ corresponds to a strongly negative coupling. The inverse is true for $\ket{\psi_n^-}$. Note that adding a complex phase to each term in the superposition does not change the classical distribution over computational measurements. In the definition, all phases are fixed to be $+1$. It will be shown that the two states defined above, together with those with different phases, form an orthonormal basis for the $n$-qubit Hilbert space. In general, we refer to all states that encode a uniform distribution over parity sorted states as \textit{encoupled}, writing $n$-coupled only for the states in the orthonormal basis introduced in section \ref{sec:generating_states}.

A given $n$-coupled state does not correspond to the measurement statistics of a classical Ising model with \textit{only} an $n$-point interaction. For example, a 3-coupled state that assigns equal probabilities $p$ to the measurement outcomes $111, 001, 010, 100$, and a vanishingly small probability $\epsilon$ to the other outcomes, leads to a 3-point interaction 
\begin{align}
    I_{123} &= -\log \frac{p_{111} p_{001} p_{010} p_{100}}{p_{000}p_{110} p_{101} p_{011}} = -\log \frac{p^4}{\epsilon^4} \coloneqq I
    \intertext{but this also implies the presence of 1- and 2-point interactions:}
    I_{12} &= -\log \frac{p_{110} p_{000}}{p_{010} p_{100}} = -\log \frac{\epsilon^2}{p^2} = -I/2\\
    I_{1} &= -\log \frac{p_{100}}{ p_{000}} = -\log \frac{p}{\epsilon} = I/4
\end{align}

In general, an $n$-coupled state has a distribution over measurement outcomes $s$ according to an Ising model with interactions of order $n$ and lower, where the lower order interactions are exponentially suppressed:

\begin{align}
    p(\vec{s}) = \mathcal{Z}^{-1} \exp\sum_{m=0}^{n-1}{\left( (-1)^{m}\frac{I}{2^{m}} \prod_{i=1}^{n-m} s_i\right)}
\end{align}

\begin{example}[2-coupled two qubit states] \label{ex:2coupled}
    \normalfont On two qubits, the $n$-coupled states are the positive-phase Bell states. To see this, note that $S^2_e = \{00, 11\}$ and $S^2_o = \{01, 10\}$. Therefore, the positive 2-coupled state is $\ket{\psi_2^+} = \frac{1}{\sqrt{2}} \left( \ket{00} + \ket{11} \right)$, and the negative 2-coupled state is $\ket{\psi_2^-} = \frac{1}{\sqrt{2}} \left( \ket{01} + \ket{10} \right)$.
\end{example}

\section{Generating $n$-coupled states \label{sec:generating_states}}

\begin{figure}[h!]
\begin{quantikz}
    \lstick{$\ket{0}^{\otimes n}$}  &\gate{\left(\bigotimes_{i=1}^{n-1} H_i \right) \otimes \mathbb{I}} & \gate{\bigotimes_{i=1}^{n-1} \text{CNOT}_{n-i-1, n-i}} &\qw &\qw & = \ket{\psi^+_n}\\
    \lstick{$\ket{0}^{\otimes n}$}  &\gate{\left(\bigotimes_{i=1}^{n-1} H_i \right) \otimes \mathbb{I}} & \gate{\bigotimes_{i=1}^{n-1} \text{CNOT}_{n-i-1, n-i}} & \gate{X_i} & \qw & = \ket{\psi^-_n}\\
\end{quantikz}
\caption{Two circuits to prepare the $n$-coupled states from the vacuum state. The first circuit prepares $\ket{\psi^+_n}$, and the second prepares $\ket{\psi^-_n}$, which requires $2n-2$ and $2n-1$ Clifford gates, respectively. Note that the $X_i$ in the circuit for $\ket{\psi_n^-}$ can be chosen to act on any $1\leq i \leq n$. \label{fig:encoupled_circuit}}
\end{figure}
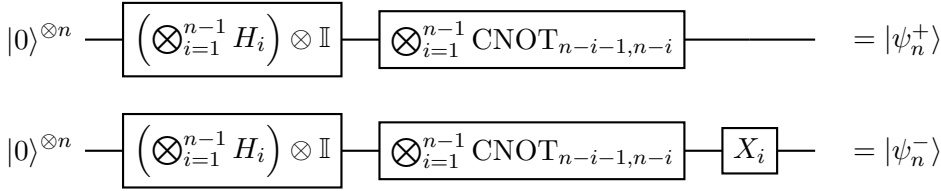

To create the $n$-coupled states from the all-zero state, one can use the circuits in Figure \ref{fig:encoupled_circuit}. Note that this is essentially a generalisation of the circuit that prepares the Bell states. In addition, one can generate a full orthonormal basis for the $n$-qubit Hilbert space by applying the circuit for $\ket{\psi^+_n}$ to each computational basis state. In that case, $\ket{\psi^-_n}$ is generated by applying the circuit to the state $\ket{1} \otimes \ket{0}^{\otimes n-1}$ instead of the vacuum state, so that the parity of the output state is exactly flipped. For $n=3$, the full orthonormal \textit{$n$-coupled basis} is then generated by the circuit $U$ as follows:

{\small{
\begin{align}
    \ket{\psi_3^1} &= U \ket{000} = \frac{1}{2}\left(~~ \ket{000}+ \ket{011}+ \ket{101}+ \ket{110}\right)\\
    \ket{\psi_3^2} &= U \ket{001} = \frac{1}{2}\left(~~ \ket{000}- \ket{011}- \ket{101}+ \ket{110}\right) = ~~~Z \otimes Z \otimes \mathds{1} \ket{\psi_3^1}\\
    \ket{\psi_3^3} &= U \ket{010} = \frac{1}{2}\left(~~  \ket{000}+ \ket{011}- \ket{101}- \ket{110}\right)= ~~~Z \otimes \mathds{1} \otimes \mathds{1} \ket{\psi_3^1} \\
    \ket{\psi_3^4} &= U \ket{011} = \frac{1}{2}\left(~~  \ket{000}- \ket{011}+ \ket{101}- \ket{110}\right)= ~~~\mathds{1} \otimes Z \otimes \mathds{1} \ket{\psi_3^1} \\
    \ket{\psi_3^5} &= U \ket{100} = \frac{1}{2}\left(~~ \ket{001}+ \ket{010}+ \ket{100}+ \ket{111}\right) = ~~~X \otimes \mathds{1} \otimes \mathds{1} \ket{\psi_3^1} \\
    \ket{\psi_3^6} &= U \ket{101} = \frac{1}{2}\left(-  \ket{001}+ \ket{010}+ \ket{100}- \ket{111}\right) = -iY \otimes Z \otimes \mathds{1} \ket{\psi_3^1} \\
    \ket{\psi_3^7} &= U \ket{110} = \frac{1}{2}\left(-  \ket{001}- \ket{010}+ \ket{100}+ \ket{111}\right) = -iY \otimes \mathds{1} \otimes \mathds{1} \ket{\psi_3^1} \\
    \ket{\psi_3^8} &= U \ket{111} = \frac{1}{2}\left(~~ \ket{001}- \ket{010}+ \ket{100}- \ket{111}\right) = ~~~X \otimes Z \otimes \mathds{1} \ket{\psi_3^1} 
\end{align}}}

It can be seen that projecting on the two subspaces spanned by $\{\ket{\psi_{1-4}}\}$ and $\{\ket{\psi_{5-8}}\}$ essentially amounts to a parity check that does not reveal any information about particular qubits, which is potentially useful in applications related to error-correction. Note also that all of these states are related through local Pauli operators on only the first two qubits. In other words: the full basis of $n$ qubits can be generated locally on $n-1$ qubits. This is obviously not true for the computational basis, but can be achieved with just $X$ and $Z$ gates for the $n$-coupled basis:

\begin{proposition}[The $n$-coupled basis can be locally generated]
    The $n$-coupled basis $\mathcal{B}_n$ can be generated from the state $\ket{\psi_n^1}$ by applying local $Z$ and $X$ Pauli operators on the first $n-1$ qubits. \label{prop:local_generation}
\end{proposition}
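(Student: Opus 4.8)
The plan is to exploit that the whole basis is the image of the computational basis under a single Clifford circuit, so that moving from $\ket{\psi_n^1}$ to any other basis vector is governed by Clifford conjugation of a Pauli. Writing $U$ for the circuit of Figure~\ref{fig:encoupled_circuit} preparing $\ket{\psi_n^+}=\ket{\psi_n^1}$, the basis is $\{\ket{\psi_n^k}=U\ket{b_k}\}$ as $b_k$ ranges over $\{0,1\}^n$. Since $\ket{b_k}=\left(\bigotimes_{i=1}^{n} X_i^{(b_k)_i}\right)\ket{0}^{\otimes n}$, inserting $U^\dagger U$ gives
\begin{equation}
    \ket{\psi_n^k} = \left(\bigotimes_{i=1}^{n} \left(U X_i U^\dagger\right)^{(b_k)_i}\right)\ket{\psi_n^1}.
\end{equation}
Because $U$ is Clifford, each $U X_i U^\dagger$ is a Pauli, so the task reduces to showing that every one of these conjugates acts on $\ket{\psi_n^1}$ as a Pauli supported on qubits $1,\dots,n-1$, with any $Y$ written as $XZ=-iY$ (which is why no independent $Y$ generator is needed, as the $\ket{\psi_3^6},\ket{\psi_3^7}$ rows already illustrate).

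First I would compute the conjugates directly from the tableau/symplectic action of $U$ over $\mathbb{F}_2$. The Hadamard layer exchanges $X\leftrightarrow Z$ on the qubits it touches, turning the corresponding input bit-flips into $Z$-type operators, and the CNOT cascade then propagates them into $Z$-strings whose support is read off from the associated linear map $M\in\mathrm{GL}_n(\mathbb{F}_2)$ via $X^u\mapsto X^{Mu}$ and $Z^v\mapsto Z^{(M^{-1})^{T}v}$. The $n=3$ table is the template: the input bits that leave the output parity unchanged conjugate to $Z$-strings on the first two qubits, matching $\ket{\psi_3^2},\ket{\psi_3^3},\ket{\psi_3^4}$.

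The main obstacle is the single generator whose flip sends $\ket{\psi_n^+}$ to $\ket{\psi_n^-}$: its conjugate is $X$-type and could a priori have support on qubit $n$. Here I would invoke the stabiliser structure of $\ket{\psi_n^1}$. Flipping any two bits preserves even parity and permutes $S^n_e$ bijectively, so $X_iX_j$ stabilises $\ket{\psi_n^1}$ for every $i\neq j$; in particular $X_n\ket{\psi_n^1}=X_i\ket{\psi_n^1}$ for any $i\le n-1$, so any stray $X$ on the last qubit can be traded for an $X$ on a qubit in $\{1,\dots,n-1\}$ without changing the state. Combining this relocation with the $Z$-strings of the previous step expresses each $\ket{\psi_n^k}$ as a product of $X$ and $Z$ operators on the first $n-1$ qubits applied to $\ket{\psi_n^1}$, which is the claim.

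As a clean independent confirmation that avoids all circuit bookkeeping, I would run a counting argument. The stabiliser group of $\ket{\psi_n^1}$ is generated, up to phase, by the even-weight $X^a$ together with $Z_1\cdots Z_n$; of its elements exactly $2^{n-2}$ (the even-weight $X^a$ with $a_n=0$) are supported on the first $n-1$ qubits. The group of Pauli operators on those $n-1$ qubits has order $4^{n-1}$ modulo phase, so the number of distinct states it produces from $\ket{\psi_n^1}$ is $4^{n-1}/2^{n-2}=2^n$. Every such state equals, up to phase, $U\bigl(Q\ket{0}^{\otimes n}\bigr)$ for a Pauli $Q$ and is therefore an $n$-coupled basis vector; since there are precisely $2^n$ basis vectors, the $2^n$ states produced must be all of them.
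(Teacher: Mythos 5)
Your proposal is correct, but it reaches the conclusion by a genuinely different route than the paper. The paper's proof is a direct orthogonality computation: it shows $\bra{\psi_n^1}\bigl(\bigotimes_{i\in S}Z_i\bigr)\bigl(\bigotimes_{j\in T}Z_j\bigr)\ket{\psi_n^1}=0$ for distinct subsets $S,T\subseteq\{1,\dots,n-1\}$ by peeling off the last qubit with the Schmidt decomposition of Proposition~\ref{prop:ncoupled_schmidt} and anticommuting a single $X_m$ (with $m$ in the symmetric difference of $S$ and $T$) through the $Z$-string via $X_mZ_mX_m=-Z_m$; this produces $2^{n-1}$ mutually orthogonal even-sector states, and one local $X_i$ then maps that sector onto the odd one. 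You instead conjugate the preparation circuit ($\ket{\psi_n^k}=(UX^{b_k}U^\dagger)\ket{\psi_n^1}$ with $U$ Clifford), relocate stray support on qubit $n$ using the stabiliser relation $X_n\ket{\psi_n^1}=X_i\ket{\psi_n^1}$, and---this is the rigorous core---run an orbit--stabiliser count: the stabiliser mod phase is the even-weight $X$-strings together with $Z^{\otimes n}$, exactly $2^{n-2}$ of its elements live on the first $n-1$ qubits, so the Pauli group there produces $4^{n-1}/2^{n-2}=2^n$ distinct states, each of which is $U\ket{b}$ up to phase, hence all of them. Each approach buys something: the paper's is elementary and self-contained, needing only the Schmidt decomposition already proved; yours identifies the generated states \emph{as} the basis vectors $U\ket{b_k}$ (the paper's orthogonality argument literally shows that \emph{an} orthonormal basis of the correct size is generated, and identifying it with the $n$-coupled basis implicitly uses the conjugation step you make explicit), and it quantifies exactly how much redundancy the restriction to $n-1$ qubits exploits. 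One small caveat in your constructive half: you only treat a stray $X$-type factor on qubit $n$, but depending on the CNOT-cascade convention the $Z$-type conjugates could pick up a $Z_n$; this is repaired the same way via the stabiliser $Z^{\otimes n}$, i.e.\ $Z_n\ket{\psi_n^1}=Z_1\cdots Z_{n-1}\ket{\psi_n^1}$, an ingredient you already list among the stabiliser generators---and in any case your counting argument is complete on its own and independent of this bookkeeping. Note also that both your argument and the paper's establish the generation only up to global phases, which is all the proposition requires.
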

\noindent Proof in appendix \ref{proof:local_generation}.

Finally, it should be noted that the $n$-coupled states are SLOCC-equivalent to GHZ states, so do not form a new multipartite entanglement class themselves. In fact, they are directly related by local Clifford operations:
\begin{align}
    \ket{\psi_n^+} &= H^{\otimes n} \ket{\text{GHZ}_n}
\end{align}
where $H$ is the Hadamard gate and $\ket{\text{GHZ}_n}$ is the $n$-qubit GHZ state.

\section{Entanglement structure}
\subsection{The Schmidt decomposition of $n$-coupled states}
There is not a single measure of entanglement, and which is used depends on the situation, and what is easily calculated. The most commonly used measure is the entanglement entropy, defined as the von Neumann entropy of the reduced density operator. It turns out that any $n$-coupled state can be written as a Schmidt decomposition, with Schmidt rank 2, in lower order $n'$-coupled states. This makes calculating the entanglement entropy of any partition of the system straightforward.

\begin{proposition}\label{prop:ncoupled_schmidt}
    For any $n$ and $m$ such that $0 < m < n$, the $n$-coupled states have the following Schmidt decomposition:
    \begin{align}
        \ket{\psi_n^+} = \frac{1}{\sqrt{2}} \left( \ket{\psi_m^+} \otimes \ket{\psi_{n-m}^+} + \ket{\psi_m^-} \otimes \ket{\psi_{n-m}^-} \right)\\
        \ket{\psi_n^-} = \frac{1}{\sqrt{2}} \left( \ket{\psi_m^+} \otimes \ket{\psi_{n-m}^-} + \ket{\psi_m^-} \otimes \ket{\psi_{n-m}^+} \right)
    \end{align}
    Furthermore, this decomposition is minimal. 
\end{proposition}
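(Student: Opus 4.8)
The plan is to exploit the additivity of parity under concatenation of bit strings. Writing each length-$n$ string $s$ as $s = s_A s_B$ with $s_A$ the first $m$ bits and $s_B$ the last $n-m$ bits, the parity of $s$ is the sum modulo $2$ of the parities of $s_A$ and $s_B$. Hence a string lies in $S^n_e$ precisely when either both blocks have even parity or both have odd parity, and it lies in $S^n_o$ precisely when exactly one block has odd parity. This partition of the index set is exactly the structure appearing on the right-hand side of the claimed decomposition, so the identity should fall out of a direct resummation.

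Concretely, I would first rewrite $\ket{\psi_n^+} = 2^{-(n-1)/2}\sum_{s \in S^n_e}\ket{s}$, using $\sqrt{2}/2^{n/2} = 2^{-(n-1)/2}$, then split the sum over $s \in S^n_e$ into the two parity cases above and factor each resulting double sum as a tensor product of a sum over $s_A$ and a sum over $s_B$. Recognising $\sum_{s_A \in S^m_e}\ket{s_A} = 2^{(m-1)/2}\ket{\psi_m^+}$ and the analogous identities for the odd block and for the $(n-m)$-register, the prefactors combine as $2^{-(n-1)/2}\cdot 2^{(m-1)/2} \cdot 2^{(n-m-1)/2} = 2^{-1/2}$, yielding the stated coefficient $1/\sqrt{2}$. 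The odd case $\ket{\psi_n^-}$ follows identically, with the parity-matched pairing replaced by the parity-mismatched pairing. This step is essentially bookkeeping of normalisation constants, so I expect it to be routine, the only subtlety being that $0 < m < n$ guarantees both blocks are nonempty so the counts $2^{m-1}$ and $2^{n-m-1}$ are correct.

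For minimality, I would first verify that the two right-hand terms genuinely constitute a Schmidt decomposition, i.e. that $\{\ket{\psi_m^+},\ket{\psi_m^-}\}$ and $\{\ket{\psi_{n-m}^+},\ket{\psi_{n-m}^-}\}$ are each orthonormal sets. Orthonormality is immediate: each pair is supported on disjoint computational basis strings (even- versus odd-parity), and each state is normalised by Definition \ref{def:ncoupled_state}. Given orthonormal Schmidt vectors with equal coefficients $1/\sqrt{2}$, the reduced density operator on the first register is $\rho_A = \tfrac{1}{2}(\ket{\psi_m^+}\!\bra{\psi_m^+} + \ket{\psi_m^-}\!\bra{\psi_m^-})$, which has rank $2$. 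Since the Schmidt rank equals the rank of $\rho_A$, the decomposition cannot be shortened, establishing minimality. The main obstacle, such as it is, lies here rather than in the algebra: one must confirm orthonormality so that the expression qualifies as a Schmidt decomposition in the first place, after which ruling out Schmidt rank $1$ amounts simply to observing that $\rho_A$ is not pure.
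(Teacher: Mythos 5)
Your proposal is correct and follows essentially the same route as the paper's proof: split the parity-sorted sum according to the parities of the two blocks, factor via distributivity of the tensor product to obtain the $1/\sqrt{2}$ coefficient, and establish minimality by noting that the reduced density operator $\rho_A = \tfrac{1}{2}\left( \ket{\psi_m^+}\bra{\psi_m^+} + \ket{\psi_m^-}\bra{\psi_m^-} \right)$ has rank $2$. Your version is marginally more explicit on two points the paper treats briskly---the normalisation bookkeeping and the verification that the Schmidt vectors are orthonormal (disjoint parity supports)---but these are presentational differences, not a different argument.
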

\noindent Proof in appendix \ref{proof:ncoupled_schmidt}.

With this Schmidt decomposition, a number of properties are immediately clear:

\begin{corollary}
    The $n$-coupled states are locally maximally entangled \label{cor:partitioning_preserves_entanglement}
\end{corollary}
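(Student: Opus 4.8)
The plan is to specialise the Schmidt decomposition of Proposition \ref{prop:ncoupled_schmidt} to the extreme single-qubit cut $m=1$ and read the single-qubit reduced density operator directly off the Schmidt coefficients. First I would identify the $1$-coupled states that then appear: since $S^1_e = \{0\}$ and $S^1_o = \{1\}$, the normalisation in Definition \ref{def:ncoupled_state} gives $\ket{\psi_1^+} = \ket{0}$ and $\ket{\psi_1^-} = \ket{1}$. Substituting $m=1$ into the proposition yields
\[
    \ket{\psi_n^+} = \frac{1}{\sqrt{2}}\left(\ket{0}\otimes\ket{\psi_{n-1}^+} + \ket{1}\otimes\ket{\psi_{n-1}^-}\right),
\]
and the analogous expression for $\ket{\psi_n^-}$ with the two parities on the second factor interchanged.

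Next I would trace out the remaining $n-1$ qubits. Because $\ket{\psi_{n-1}^+}$ and $\ket{\psi_{n-1}^-}$ are supported on disjoint sets of computational strings (even versus odd parity), they are orthonormal, so the cross terms in $\ket{\psi_n^+}\bra{\psi_n^+}$ vanish under the partial trace and the single-qubit reduced operator collapses to $\rho_1 = \tfrac{1}{2}\left(\ket{0}\bra{0}+\ket{1}\bra{1}\right) = \tfrac{1}{2}\mathbb{I}$. This is the maximally mixed qubit state, so the first qubit is maximally entangled with the complementary $n-1$ qubits. Since $S^n_e$ is invariant under any permutation of the qubit labels, $\ket{\psi_n^\pm}$ is permutation-symmetric, and the identical argument applies to every qubit; hence every single-qubit marginal equals $\tfrac{1}{2}\mathbb{I}$, which is exactly the asserted local maximal entanglement.

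There is no real computational obstacle: the corollary is immediate once the Schmidt decomposition is available. The only point demanding care is conceptual, namely pinning down what \emph{locally} maximally entangled means. It refers to the single-qubit reductions being maximally mixed, and must be distinguished from absolute maximal entanglement: because the Schmidt rank across \emph{any} cut is exactly $2$, the entanglement entropy of every bipartition is capped at one bit, so for balanced cuts with $m>1$ the reduced state is far from proportional to the identity. The states are therefore emphatically not AME for large $n$, consistent with the nonexistence results cited for $n\notin\{2,3,5,6\}$; the corollary is the weaker, and correct, statement about local one-qubit reductions only.
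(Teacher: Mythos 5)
Your proposal is correct and takes essentially the same route as the paper: both read the result directly off the Schmidt decomposition of Proposition \ref{prop:ncoupled_schmidt}, whose two equal coefficients $1/\sqrt{2}$ force an entanglement entropy of one bit across the single-qubit cut. Your version is merely more explicit than the paper's one-line argument --- specialising to $m=1$, computing $\rho_1 = \tfrac{1}{2}\mathbb{I}$, invoking permutation symmetry for the remaining qubits, and correctly flagging that ``locally'' refers to single-qubit marginals rather than the (unattainable) AME condition --- which is a welcome precision but not a different proof.
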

\begin{proof}
    The Schmidt decomposition of an $n$-coupled state has two positive and identical coefficients, which implies a Schmidt rank of 2, and an entanglement entropy of 1, which implies that the state is locally maximally entangled.
\end{proof}

\begin{corollary}
    The $n$-coupled states are maximally connected.\label{cor:connectedness}
\end{corollary}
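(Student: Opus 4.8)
The plan is to exploit the permutation symmetry of the $n$-coupled states together with a single computational-basis measurement on the spectator qubits, reducing the claim to Example \ref{ex:2coupled}. Recall that a pure $n$-qubit state is \emph{maximally connected} (in the sense of Briegel and Raussendorf) if, for any pair of qubits, local projective measurements on the remaining $n-2$ qubits project that pair into a Bell state \emph{with certainty} -- that is, every measurement outcome leaves the chosen pair maximally entangled. Since $\ket{\psi_n^{\pm}}$ is an equal-weight superposition over all strings of a fixed parity, it is invariant under any permutation of the $n$ qubits (a permutation acts bijectively within each parity class), so it suffices to treat one fixed pair, say qubits $1$ and $2$.

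First I would measure qubits $3, \ldots, n$ in the computational basis. Writing an outcome as $s_3 \cdots s_n$ with parity $q = \left(\sum_{i=3}^{n} s_i\right) \bmod 2$, the post-measurement state on qubits $1,2$ is, up to normalisation, the equal superposition of exactly those $s_1 s_2$ that complete the global parity constraint. For $\ket{\psi_n^+}$ this forces $s_1 + s_2 \equiv q \pmod 2$, so the pair collapses to $\tfrac{1}{\sqrt 2}(\ket{00}+\ket{11}) = \ket{\psi_2^+}$ when $q=0$ and to $\tfrac{1}{\sqrt 2}(\ket{01}+\ket{10}) = \ket{\psi_2^-}$ when $q=1$; both are Bell states by Example \ref{ex:2coupled}. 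The odd state $\ket{\psi_n^-}$ is handled identically with the global parity constraint flipped. Thus every outcome yields a maximally entangled pair, which is precisely maximal connectedness.

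A cleaner way to package the same computation is via Proposition \ref{prop:ncoupled_schmidt} with $m=2$ on the bipartition $\{1,2\}$ versus the rest: the state equals $\tfrac{1}{\sqrt 2}(\ket{\psi_2^+}\otimes\ket{\psi_{n-2}^+} + \ket{\psi_2^-}\otimes\ket{\psi_{n-2}^-})$, and any local measurement resolving the parity of the $(n-2)$-qubit block (which a computational-basis measurement does) projects qubits $1,2$ onto one of the two orthogonal Bell branches. Alternatively, one may invoke $\ket{\psi_n^+} = H^{\otimes n}\ket{\mathrm{GHZ}_n}$ together with the known maximal connectedness of GHZ states, noting that this property is preserved under local unitaries, which merely conjugate the spectator measurement bases and map Bell states to Bell states.

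I expect the only genuine subtlety to be matching the precise definition: one must verify outcome-\emph{independence}, namely that \emph{every} computational-basis outcome on the spectators -- not merely a fortunate one -- leaves the pair maximally entangled, and that both parity sectors are covered. The direct measurement in the second paragraph settles exactly this point, so once the permutation-symmetry reduction and the even/odd case split are in place, no harder step remains.
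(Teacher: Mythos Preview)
Your proposal is correct and matches the paper's approach: the paper's proof simply invokes Proposition~\ref{prop:ncoupled_schmidt} (with $m=2$) to see that measuring the other $n-2$ qubits leaves the chosen pair in a $2$-coupled state, then cites Example~\ref{ex:2coupled} to identify those with Bell states. Your second paragraph is exactly this argument, and your first paragraph is the same computation unpacked at the level of computational-basis outcomes; the extra care about permutation symmetry and outcome-independence is implicit in the paper but good to spell out.
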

\begin{proof}
In \cite{briegel2001persistent}, the authors call an entangled $n$-qubit state \textit{maximally connected} if any two qubits can be projected, with certainty, into a pure Bell state by local measurement on the other qubits. Proposition \ref{prop:ncoupled_schmidt} shows that when all but two of the qubits are measured, the remaining two qubits end up in one of the 2-coupled states. Example \ref{ex:2coupled} illustrated that these 2-coupled states are in fact Bell states. 
\end{proof}

\begin{corollary}
    The $n$-coupled states have a matrix product state representation with bond dimension 2:
    \begin{align}
        \ket{\psi_n^+} &= \frac{\sqrt{2}}{2^{\frac{n}{2}+1}} \sum_{s_1, s_2, \dots, s_n} \text{Tr}(M^{s_1} M^{s_2} \dots M^{s_n}) \ket{s_1, s_2, \dots, s_n}
    \end{align}
    where $M^0 = X$ and $M^1 = \mathds{1}_{2\times 2}$.    
\end{corollary}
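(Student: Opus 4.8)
The plan is to verify directly that the proposed periodic (ring) MPS reproduces, string by string, the amplitudes of $\ket{\psi_n^+}$ in Definition \ref{def:ncoupled_state}. Since that state assigns amplitude $\frac{\sqrt{2}}{2^{n/2}}$ to every even-parity string and $0$ otherwise, it suffices to show that $\text{Tr}(M^{s_1}M^{s_2}\cdots M^{s_n})$ equals $2$ precisely on $S^n_e$ and vanishes elsewhere; the prefactor then supplies the missing factor $\tfrac12$, since $\frac{\sqrt{2}}{2^{n/2+1}}\cdot 2 = \frac{\sqrt{2}}{2^{n/2}}$. The \emph{a priori} reason a bond dimension of $2$ can work is Proposition \ref{prop:ncoupled_schmidt}: the bond dimension of an MPS is bounded below by the maximal Schmidt rank over contiguous bipartitions, which that proposition fixes at exactly $2$, so a width-$2$ ansatz is the natural one to attempt.

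The core computation is the evaluation of the ordered matrix product. Because $M^1=\mathds{1}$ commutes with everything and $M^0=X$, the product collapses to a single power of $X$:
\begin{equation}
    M^{s_1}M^{s_2}\cdots M^{s_n} = X^{\,k}, \qquad k = \#\{\, i : s_i = 0 \,\}.
\end{equation}
Using $X^2=\mathds{1}$ together with $\text{Tr}(\mathds{1})=2$ and $\text{Tr}(X)=0$, one gets $\text{Tr}(X^k)=2$ when $k$ is even and $0$ when $k$ is odd. Hence the surviving basis states are exactly those with an even number of zeros, and the whole argument is essentially a one-line parity calculation once this reduction is made.

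The hard part — really the only place demanding care — is matching the ``even number of zeros'' selected by the trace to the ``even number of ones'' defining $S^n_e$. Since $k + (\text{number of ones}) = n$, the two parities coincide exactly when $n$ is even, in which case the surviving strings are precisely $S^n_e$ and the claim follows. For odd $n$ the parity is inverted (the literal assignment $M^0=X,\,M^1=\mathds{1}$ then produces $\ket{\psi_n^-}$), so I would close the argument by reading the roles as $M^0=\mathds{1},\,M^1=X$, for which $\text{Tr}(X^{\#\text{ones}})$ selects $S^n_e$ uniformly in $n$; the identical trace computation then applies. I would make this parity bookkeeping explicit in the final step. A concluding consistency check confirms the overall scaling: there are $2^{n-1}$ even-parity strings, each carrying amplitude $2^{(1-n)/2}$, so the state is correctly normalised, and the MPS prefactor times the trace value $2$ reproduces exactly this amplitude.
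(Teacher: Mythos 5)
Your proof is correct and uses the same core computation as the paper: the product collapses to a power of $X$, and $\tr(X^k)\in\{2,0\}$ according to the parity of $k$, with the prefactor absorbing the trace value $2$. The one substantive difference is that you handle the parity bookkeeping more carefully than the paper does. The paper's proof asserts that the number of $M^0$ factors ``is exactly the parity of the string $s$,'' but the trace selects strings with an even number of \emph{zeros}, and since $\#\{i:s_i=0\}=n-\#\{i:s_i=1\}$ this coincides with membership in $S^n_e$ only when $n$ is even; for odd $n$ the assignment $M^0=X$, $M^1=\mathds{1}$ as stated actually produces $\ket{\psi_n^-}$. Your proposed repair --- taking $M^0=\mathds{1}$, $M^1=X$, so that the trace is $\tr\bigl(X^{\#\text{ones}}\bigr)$ and selects $S^n_e$ uniformly in $n$ --- is the right fix and arguably how the corollary should be stated. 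Your observation that Proposition~\ref{prop:ncoupled_schmidt} bounds the bond dimension below by the maximal Schmidt rank, making $2$ optimal, is a nice addition the paper leaves implicit.
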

\begin{proof}
    Note that $\text{Tr}\left(M^{s_1} M^{s_2} \ldots M^{s_n}\right)$ is zero if and only if an odd number of $M^0$ appear, and $2$ otherwise. This is exactly the parity of the string $s$.
\end{proof}

\begin{corollary}
    For $n>2$, the $n$-coupled states contain a Q-information of $\Omega_Q = n-3$. \label{cor:Q_info}
\end{corollary}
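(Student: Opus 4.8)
The plan is to reduce the evaluation of $\Omega_Q$ entirely to the spectra of the reduced density operators, all of which are pinned down by Proposition~\ref{prop:ncoupled_schmidt}. I will take $\Omega_Q$ to be the higher-order (quantum ``organizational'') information carried by the correlations, so that it is a function only of the von Neumann entropies $S(\rho_A)$ of the subsystems $A\subseteq\{1,\dots,n\}$; in the paper's framework these entropies are equivalently the data repackaged by the nested Ising expansion introduced above. The first step is therefore to record the full entropy profile. Since $\ket{\psi_n^\pm}$ is an equal superposition over a parity class it is invariant under permutations of the qubits, so any subset of size $k$ is equivalent to the first $k$, and Proposition~\ref{prop:ncoupled_schmidt} with $m=k$ gives a rank-$2$ Schmidt decomposition with equal coefficients across that cut. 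Hence $S(\rho_A)=1$ for every proper nonempty $A$, while $S(\rho_\emptyset)=S(\rho_{\{1,\dots,n\}})=0$ by purity. This ``flat'' profile is the only input the rest of the argument needs.

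With the profile in hand, the second step is to substitute it into the definition of $\Omega_Q$ and evaluate the resulting combinatorial sum. Because every proper subset contributes the same entropy, the weighted and alternating sums collapse through the binomial identity $\sum_{k}(-1)^k\binom{n}{k}=0$, and the only surviving contributions come from the endpoints $k=0$ and $k=n$, where the profile drops back to $0$. Carrying this endpoint bookkeeping through is what turns a sum of identical terms into the linear-in-$n$ answer.

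An alternative, and I expect cleaner, route is induction on $n$ with base case $n=3$. For $n=3$ one reads $\Omega_Q=0$ directly off the explicit state $\ket{\psi_3^1}=\tfrac12(\ket{000}+\ket{011}+\ket{101}+\ket{110})$. For the inductive step I would peel off one qubit using the $m=n-1$ case of Proposition~\ref{prop:ncoupled_schmidt},
\[
\ket{\psi_n^+}=\tfrac{1}{\sqrt2}\bigl(\ket{\psi_{n-1}^+}\otimes\ket{\psi_1^+}+\ket{\psi_{n-1}^-}\otimes\ket{\psi_1^-}\bigr),
\]
and show that adjoining the new qubit raises the Q-information by exactly one, giving the recursion $\Omega_Q(\psi_n)=\Omega_Q(\psi_{n-1})+1$. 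The single extra bit is precisely the entanglement across the new cut guaranteed by Corollary~\ref{cor:partitioning_preserves_entanglement}; solving the recursion with $\Omega_Q(\psi_3)=0$ yields $\Omega_Q=n-3$.

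The main obstacle is the inductive step: I must show the per-qubit increment is \emph{exactly} $+1$ and not larger, which means controlling how $\Omega_Q$ responds to the rank-$2$, equal-weight bipartition rather than merely bounding it. Concretely, I have to verify that the cross-correlations between the new qubit and the old block introduce no additional higher-order term beyond the single bit already counted, equivalently that the measurement statistics gain exactly one new interaction order per added qubit, as foreshadowed by the nested Ising expansion. Finally I would check the boundary behaviour forcing the hypothesis $n>2$: at $n=2$ the formula returns $-1$, reflecting that a lone Bell cut carries no genuinely higher-order (three-party-or-more) information, so the statement is naturally restricted to $n\ge 3$.
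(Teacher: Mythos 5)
There is a genuine gap, and it sits at the very first move. The Q-information as defined in the paper (following \cite{javaronequantifying}) is
$\Omega_Q = (n-2) S(\rho) + \sum_i \bigl( S(\mathrm{Tr}_{n\setminus i}\rho) - S(\mathrm{Tr}_i\rho)\bigr)$,
which involves only the global entropy, the single-qubit marginals, and their complements --- not the full entropy profile over all subsets, so your binomial-identity collapse addresses a quantity the paper never uses. Worse, $\Omega_Q$ applied to \emph{any} pure state is identically zero: for a pure state $S(\rho)=0$ and $S(\mathrm{Tr}_{n\setminus i}\rho)=S(\mathrm{Tr}_i\rho)$ for every $i$ because complementary subsystems of a pure state have equal entropy. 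So your first route, carried out honestly with the actual definition and your (correct) flat entropy profile $S(\rho_A)=1$, yields $\Omega_Q=0$ for all $n$, not $n-3$. Your inductive route fails for the same reason: the base case $n=3$ only ``works'' because $n-3=0$ coincides with the trivial pure-state value, and the increment-by-one step you flag as the main obstacle is not merely unproven --- it is false for the pure state, since $\Omega_Q(\psi_4)=0\neq 1$.

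The missing idea, which is the entire content of the paper's short proof, is a convention shift: since $\Omega_Q$ vanishes on pure states, one \emph{first traces out a single qubit} (any one, by permutation symmetry) and evaluates $\Omega_Q$ on the resulting $(n-1)$-qubit mixed state $\rho = \tfrac{1}{2}\bigl(\ket{\psi_{n-1}^+}\bra{\psi_{n-1}^+} + \ket{\psi_{n-1}^-}\bra{\psi_{n-1}^-}\bigr)$, which Proposition \ref{prop:ncoupled_schmidt} hands you directly. For this state $S(\rho)=1$, and every single-qubit marginal and every co-single-qubit marginal again has entropy $1$ (tracing a qubit out of either $\ket{\psi_{n-1}^\pm}$ produces the same equal mixture of $(n-2)$-coupled states), so the correction sum cancels term by term and $\Omega_Q = \bigl((n-1)-2\bigr)\cdot 1 = n-3$. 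Your computation of the entropy profile via Proposition \ref{prop:ncoupled_schmidt} is correct and is indeed the needed ingredient, but without the trace-out step it cannot reach the stated value.
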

\begin{proof}
The authors of \cite{javaronequantifying} aim to quantify higher-order dependence in quantum states using the Q-information, defined on an $n$-qubit state $\rho$ as:
\[
    \Omega_Q = (n-2) S(\rho) + \sum_i \left( S(\text{Tr}_{n\setminus i}\rho) - S(\text{Tr}_i\rho)\right)\]
where $\text{Tr}_{n\setminus i}$ denotes the partial trace over all subsystems except for $i$. While the Q-information is easily seen to be zero for any pure state, one can associate a Q-information to a pure state by tracing out any single qubit. Since the $n$-coupled states are symmetric under qubit permutations, we trace out the first qubit without loss of generality to obtain the density matrix $\rho$, such that $S(\rho)=1$, and $\Omega_Q = (n-1)-2 = n-3$.
\end{proof}

Given the SLOCC-equivalence to GHZ states, it should not come as a surprise that these four properties are the identical to those of GHZ states. In \cite{javaronequantifying} the authors suggest that a positive Q-information reflects a redundancy in the information carried by the qubits. Qubits in a GHZ state carry redundant information in the sense that a single measurement on any qubit fixes measurement outcomes of all the other qubits to be the same. The $n$-coupled states, however, are essentially an equal superposition over the truth table of an XOR gate with $n-1$ inputs and are thus maximally synergistic with respect to measurements in the computational basis. However, measurements in the $X$-basis still show maximal redundancy. This shows that the redundancy quantified by the Q-information is basis-independent and does not correspond to a classical notion of redundancy. 

Similarly basis-independent is the persistency of the entanglement. In \cite{briegel2001persistent}, the persistency $P_e$ was defined as \textit{the minimum number of local measurements such that, for all measurement outcomes, the state is completely disentangled}. Both the GHZ states and the $n$-coupled states can be disentangled with a single measurement, so $P_e=1$. However, the entanglement of $n$-coupled states is maximally persistent when measurements are restricted to the computational basis, whereas GHZ states are disentangled after a single computational measurement.

\subsection{Hyperdeterminants}

In the same way that higher-order mutual information (also known as interaction information) has received much less attention than the bipartite case, entanglement has been mainly calculated with respect to bipartitions of the system. This is in part due to the lack of a unique measure of multipartite entanglement \cite{coffman2000distributed,verstraete2003normal,hein2004multiparty,horodecki2009quantum,miyake2004multipartite,miyake2003classification}. However, one entanglement monotone has been consistently referred to as \textit{genuine} multipartite entanglement: the absolute value of the hyperdeterminant \cite{gelfand1994hyperdeterminants,miyake2002multipartite,miyake2003classification}. The reason that it is referred to as genuine multipartite entanglement is that the SLOCC equivalence classes are orbits of the group $\text{SL}(2, \mathbb{C})^{n}$ on the combined Hilbert space $(\mathbb{C}^2)^{\otimes n}$, and the hyperdeterminant of a tensor is invariant under the action of this orbit. In fact, for $n\leq 3$, the hyperdeterminant classification is equivalent to the SLOCC classification \cite{miyake2002multipartite,miyake2003classification}. While the absolute value of the hyperdeterminant is further known to be a true entanglement monotone \cite{miyake2003classification}, calculating the hyperdeterminant for $n>4$ has proven intractable in almost all cases \cite{horodecki2009quantum,cervera2018multipartite}. Here, we show that the hyperdeterminant of W and GHZ states is zero for all $n>2$ and $n>3$, respectively. By SLOCC-equivalence the latter also holds for the $n$-coupled states. 

The hyperdeterminant is a polynomial function of the entries of a state's tensor $T$. On an $n$-qubit state $\ket{\psi}$, this $2^n = 2\times 2 \times ...$ tensor $T$ is defined as follows:
\begin{align}
    \ket{\psi} = \sum_{i_1, i_2, \ldots, i_n=0}^1 T_{i_1, i_2, \ldots, i_n} \ket{i_1 i_2 \ldots i_n}
\end{align}
The hyperdeterminant of $T$, denoted $\text{HDET}(T)$, has a relatively straightforward form for $n=2, 3, 4$, but beyond that it is a polynomial with many millions of terms. For this reason, not much is known about the hyperdeterminants of general $n$-qubit states. However, when one defines a multilinear form $f$ as
\begin{align}
    f = \sum_{i_1, i_2, \ldots, i_n} T_{i_1, i_2, \ldots, i_n} x^{(1)}_{i_1} x^{(2)}_{i_2} \ldots x^{(n)}_{i_n}
\end{align}
then the hyperdeterminant vanishes whenever both $f$ and all its partial derivatives vanish for some nontrivial set of vectors $\{x^{(i)}\}$ (nontrivial meaning  $\nexists i: x^{(i)}=\vec{0}$) \cite{cayley1845theory, gelfand1994hyperdeterminants}:
\begin{align}
    \frac{\partial f}{\partial x^{(1)}_{i_1}} = \frac{\partial f}{\partial x^{(2)}_{i_2}} = \ldots = \frac{\partial f}{\partial x^{(n)}_{i_n}} = 0 = f
\end{align}
Note that the hyperdeterminant of a product state is necessarily zero, as a product state $\ket{s}$ corresponds to a tensor with a single nonzero amplitude $t$, i.e. a global phase. The associated multilinear form is thus simply
\begin{align}
    f = t x^{(1)}_{s_1} x^{(2)}_{s_2} \ldots x^{(n)}_{s_n}
    \intertext{so that }
    \frac{\partial f}{\partial x_{s_i}^{(i)}} = t \prod_{\substack{j=1 \\ j\neq i}}^n x^{(j)}_{s_j}
\end{align}
which obviously always has nontrivial solutions, since only one component of every vector $x^{(i)}$ appears, leaving the other completely free. 

This implies that whenever an $n$-qubit state has nonzero hyperdeterminant, then it cannot be created from a product state by SLOCC, and thus contains \textit{genuine} multipartite entanglement \cite{horodecki2009quantum}. Furthermore, states with different hyperdeterminants can thus be said to have different kinds of multipartite entanglement. For $n=2, 3, 4$, the full spectrum of SLOCC equivalence classes has been mapped out (\cite{dur2000three,verstraete2002four, li2007slocc}), which clearly differentiates the GHZ states from the W states. 

Below, we show that for $n>3$, the GHZ and W states become indistinguishable from product states in terms of their hyperdeterminant.

\begin{proposition}
    Let $\ket{\text{W}_n} = \frac{1}{\sqrt{n}} \sum_{i=1}^n\left( \ket{0}^{\otimes i-1} \otimes \ket{1} \otimes \ket{0}^{\otimes n-i}\right)$. Then $\text{HDET}(\ket{\text{W}_n})=0$ for all $n>2$
\end{proposition}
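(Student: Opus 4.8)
The plan is to exhibit an explicit nontrivial common zero of the multilinear form $f$ associated with $\ket{\text{W}_n}$ together with all of its first partial derivatives; by Cayley's criterion stated above, the mere existence of such a point forces $\text{HDET}(\ket{\text{W}_n})=0$, so no actual evaluation of the hyperdeterminant polynomial is needed.

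First I would write the form out explicitly. Denoting the two components of the $i$-th vector by $a_i=x^{(i)}_0$ and $b_i=x^{(i)}_1$, the tensor of $\ket{\text{W}_n}$ is supported exactly on the Hamming-weight-one strings, so that
\begin{equation}
    f = \frac{1}{\sqrt{n}} \sum_{k=1}^n b_k \prod_{j\neq k} a_j .
\end{equation}
Differentiating then gives
\begin{equation}
    \frac{\partial f}{\partial b_m} = \frac{1}{\sqrt{n}} \prod_{j\neq m} a_j,
    \qquad
    \frac{\partial f}{\partial a_m} = \frac{1}{\sqrt{n}} \sum_{k\neq m} b_k \prod_{\substack{j\neq k \\ j\neq m}} a_j .
\end{equation}
The essential structural observation is that every monomial appearing in $f$ or in any of its partials is a product of the $a_j$ that omits at most one index (for $f$ and for $\partial f/\partial b_m$) or at most two indices (for $\partial f/\partial a_m$).

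The key step is then a short counting argument. Since $n>2$, I can set three of the components to zero, say $a_1=a_2=a_3=0$, and the remaining $a_j$ to $1$. Because each surviving product omits at most two indices, it always retains at least one factor $a_j$ with $j\in\{1,2,3\}$ and hence vanishes; consequently $f$ and all of its partial derivatives vanish simultaneously at this point. To respect the nontriviality requirement (no $x^{(i)}=\vec{0}$), I set $b_1=b_2=b_3=1$, so the first three vectors are $(0,1)$ and the rest are $(1,b_j)$ with $b_j$ free for $j\geq 4$; none is the zero vector. This is precisely the nontrivial solution Cayley's condition demands, and the proposition follows. For $n=3$ the construction degenerates to putting all three $a_j=0$, which still satisfies every equation and recovers the familiar fact that the three-qubit W state has vanishing hyperdeterminant. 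I do not anticipate a genuine obstacle: the only point needing care is the nontriviality constraint, which the choice $b_1=b_2=b_3=1$ settles, together with the observation that three vanishing $a_j$ suffice exactly because no partial derivative of $f$ ever omits more than two of them at once.
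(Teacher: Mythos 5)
Your proposal is correct and follows essentially the same route as the paper's own proof: both exhibit the nontrivial common zero $x^{(1)}_0 = x^{(2)}_0 = x^{(3)}_0 = 0$ of $f$ and all its partials, noting that every monomial in the partials omits at most two of the $a_j$ so three vanishing ones suffice. Your explicit assignment $b_1=b_2=b_3=1$ to guarantee no vector is $\vec{0}$ is a slightly more careful handling of the nontriviality condition than the paper's ``leave all other variables free,'' but the argument is identical in substance.
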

\begin{proof}
    The multilinear form associated to the hyperdeterminant of the W-state is as follows:
    \begin{align}
        f(x) = \frac{1}{\sqrt{n}}\sum_{i=1}^n \left(x^{(i)}_1 \prod_{\substack{j=1 \\ j\neq i}}^n x^{(j)}_0\right)
        \intertext{so that the partial derivatives take the form}
        \frac{\partial f}{\partial x^{(k)}_1} = \frac{1}{\sqrt{n}}\prod_{\substack{j=1 \\ j\neq k}}^n x^{(j)}_{0} \label{eq:partialdiff_W_1}\\
        \frac{\partial f}{\partial x^{(k)}_0} = \frac{1}{\sqrt{n}}\sum_{\substack{i=1 \\ i\neq k}}^n \left( x^{(i)}_1 \prod_{\substack{j=1 \\ j\neq k \\ j\neq i}}^n x^{(j)}_{0}\right) \label{eq:partialdiff_W_0}
    \end{align}

    For \eqref{eq:partialdiff_W_1} to vanish for all $k$, at least two distinct $x^{(j)}_0$ have to be zero. For \eqref{eq:partialdiff_W_0} to vanish, it suffices that at least one $x^{(j)}_0$ be zero for every $i$ and $k$, where $j\neq i$ and $j \neq k$. That is satisfied as long as at least three distinct $x^{(i)}_0$ are zero, so a nontrivial solution can be constructed for all $n>2$ by setting $x^{(1)}_0 = x^{(2)}_0 = x^{(3)}_0=0$ and leaving all other variables free. This obviously also sets $f(x)=0$. Since there are nontrivial solutions on which both $f$ and its partial derivatives vanish, the hyperdeterminant vanishes. 
\end{proof}

\begin{proposition}
    Let $\ket{\text{GHZ}_n} = \frac{1}{\sqrt{2}} \left( \ket{0}^{\otimes n} + \ket{1}^{\otimes n} \right)$. Then $\text{HDET}(\ket{\text{GHZ}_n})=0$ for all $n>3$
\end{proposition}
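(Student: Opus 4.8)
The plan is to reuse the Cayley vanishing criterion exactly as in the W-state proposition, now applied to the GHZ tensor. First I would write down the associated multilinear form. Since the GHZ tensor has only the two nonzero entries $T_{0,\ldots,0} = T_{1,\ldots,1} = \tfrac{1}{\sqrt{2}}$, the form is
\begin{align}
    f(x) = \frac{1}{\sqrt{2}}\left(\prod_{i=1}^n x^{(i)}_0 + \prod_{i=1}^n x^{(i)}_1\right).
\end{align}
There are $2n$ partial derivatives, two per qubit, and they split cleanly into an ``all-zero-index'' family and an ``all-one-index'' family:
\begin{align}
    \frac{\partial f}{\partial x^{(k)}_0} = \frac{1}{\sqrt{2}}\prod_{\substack{j=1 \\ j\neq k}}^n x^{(j)}_0, \qquad \frac{\partial f}{\partial x^{(k)}_1} = \frac{1}{\sqrt{2}}\prod_{\substack{j=1 \\ j\neq k}}^n x^{(j)}_1.
\end{align}

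Next I would read off the vanishing conditions. For $\partial f / \partial x^{(k)}_0$ to vanish for every $k$, at least two distinct $x^{(j)}_0$ must be zero (a single zero does not suffice, since the derivative in that very coordinate omits its own factor). By symmetry, forcing all $\partial f / \partial x^{(k)}_1$ to vanish requires at least two distinct $x^{(j)}_1$ to be zero. Once both conditions hold, $f$ itself vanishes automatically, because each of its two product terms already contains a zero factor. I would then exhibit an explicit nontrivial solution: set $x^{(1)}_0 = x^{(2)}_0 = 0$ and $x^{(3)}_1 = x^{(4)}_1 = 0$, leaving every remaining component nonzero. This satisfies all $2n$ derivative conditions as well as $f = 0$, and it uses four distinct qubit indices, hence requires $n \geq 4$, i.e. $n > 3$.

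The main obstacle, and the point that genuinely separates this from the W case, is the nontriviality requirement that no vector $x^{(i)}$ be identically zero. The index set on which the $0$-slot components vanish and the index set on which the $1$-slot components vanish must therefore be disjoint: were some index to lie in both, that vector would be the zero vector, and the solution would no longer count. Demanding two indices in each disjoint set is exactly what forces $n \geq 4$, and explains why the construction collapses at $n = 3$, where only three indices are available and $\ket{\text{GHZ}_3}$ in fact retains a nonzero hyperdeterminant, sitting in a genuinely tripartite SLOCC class. So the essential content is bookkeeping on how many coordinates can be zeroed while keeping each local vector alive; everything else follows by the same mechanism as the preceding proposition.
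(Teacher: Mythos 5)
Your proposal is correct and follows essentially the same route as the paper's proof: the same multilinear form, the same observation that at least two of the $x^{(i)}_0$ and two of the $x^{(i)}_1$ must vanish, and the identical witness $x^{(1)}_0 = x^{(2)}_0 = x^{(3)}_1 = x^{(4)}_1 = 0$ for $n>3$. Your added remark that the nontriviality condition $x^{(i)} \neq \vec{0}$ forces the two zero-index sets to be disjoint is a slightly more explicit account of why the construction fails at $n=3$, but it is the same argument the paper makes implicitly.
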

\begin{proof}
    The multilinear form associated to the hyperdeterminant of the GHZ-state is as follows:
    \begin{align}
        f(x) = \frac{1}{\sqrt{2}} \left( \prod_{i=1}^n x^{(i)}_0 + \prod_{i=1}^n x^{(i)}_1\right)
        \intertext{so that the partial derivatives take the form}
        \frac{\partial f}{\partial x^{(i)}_s} = \frac{1}{\sqrt{2}}\prod_{\substack{j=1 \\ j\neq i}}^n x^{(j)}_{s}
    \end{align}
    For all these partial derivatives to vanish, at least two of the $x^{(i)}_1$ and two of the $x^{(i)}_0$ should be zero. For $n=3$, this is not possible without having at least one $x^{(i)} = \vec{0}$, which means that there are no nontrivial solution, and the hyperdeterminant is nonzero. In contrast, for $n>3$, one can always construct a solution as $x^{(1)}_0 = x^{(2)}_0 = x^{(3)}_1 = x^{(4)}_1 = 0$, leaving all other variables free, which makes $f$ and all partial derivatives vanish, resulting in a vanishing hyperdeterminant. 
\end{proof}
To our best knowledge, the hyperdeterminant of the GHZ and W states was not known to be zero for all $n>3$ before. Note that this immediately implies that the hyperdeterminant of the $n$-coupled states is zero for all $n>3$.

\section{Applications of Encouplement \label{sec:applications}}
Since the $n$-coupled states are maximally connected, they can always be projected to a Bell state, at which point they can be used in standard quantum information protocols. However, there is more entangled structured available and the fact that the whole basis can be generated locally leads to new capabilities. However, because the $n$-coupled states are SLOCC-equivalent to GHZ states, they cannot be used to achieve anything that was not already possible with GHZ states. Still, their precise definition suggests some other uses of the entanglement structure, and they allow for precise phase-perturbations that result in both known and new quantum protocols. The phase perturbations preserve the Ising-couplings in the measurement statistics, so we will refer to all such states as \textit{encoupled}. Some of their applications are explored in this section.

\subsection{Quantum-secure and dense coding using encouplement}
Traditionally, quantum-secure dense coding has been done with shared Bell pairs. Every shared Bell pair allows one party to send two classical bits by sending one qubit. That is, by preparing $2n$ qubits in a pairwise maximally entangled state, you can send $2n$ classical bits by sending just $n$ qubits over a quantum channel. 

\paragraph{Dense coding with shared dictionary}
The $n$-coupled states allow for a different dense coding protocol. As shown in Section \ref{sec:generating_states}, each of the eight $3$-coupled states can be generated by applying two local Pauli operators on the first two qubits to the state $\ket{\psi_3^1}$. Let Alice and Bob share the state $\ket{\psi_3^1}$, where Alice has the first two qubits, and Bob the third. Alice can then encode a 3-bit message in her two qubits by applying the two appropriate local Pauli operators. After sending the two qubits to Bob, Bob can retrieve the message by measuring the three qubits in the $n$-coupled basis, provided he knows the encoding Alice used (they share a dictionary). Alice can thus send 3 classical bits by only sending 2 qubits. That is fewer than the 4 classical bits that she could send if Alice and Bob shared two Bell pairs, but note that the total number of qubits that need to be prepared and sent is lower than in the traditional protocol. In the traditional protocol, if Claire is the one who prepares the two Bell pairs, then Claire first needs to send Alice and Bob both 2 qubits, after which Alice needs to send an additional 2 qubits to get the 4 classical bits to Bob: a total of 6 qubit transfers using 4 qubits to send 4 classical bits. In contrast, when Claire prepares the state $\ket{\psi_3^1}$, she only needs to send 2 qubits to Alice and 1 to Bob, after which Alice can send 2 qubits to Bob to send the 3 classical bits: a total of 5 qubit transfers using 3 qubits to send 3 classical bits.

A natural question is: Is this also possible for $n>3$? The most obvious choice is letting Alice and Bob share a state $\ket{\psi_4^1}$, of which Alice has two qubits, and Bob the other two. However, if Alice is to perform local Pauli operations on her two qubits, she has a total of $4^2=16$ possible choices. While that sounds like it would be enough to encode 4 classical bits, it turns out that it is not. The reason is that the 16 different Pauli operations do not lead to a new orthogonal basis. For example, $\sigma^1 \otimes \sigma ^1 \otimes \mathds{1} \otimes \mathds{1} \ket{\psi_4^1} = \ket{\psi_4^1}$. That this doesn't work is a reflection of the strong symmetry present in the definition of the $n$-coupled basis states above. However, note that the phases were arbitrarily chosen, and changing them leads to states with different symmetries, but that still correspond to Ising-coupled measurement statistics. For example, consider the state

\begin{equation}
    \ket{\chi^{00}_4} = \frac{1}{2 \sqrt{2}}\left(\ket{0000} - \ket{0011} - \ket{0101} + \ket{0110} + \ket{1001} + \ket{1010} + \ket{1100} + \ket{1111}\right)
\end{equation}

Note that this state is not invariant under the action of $\sigma^1 \otimes \sigma ^1 \otimes \mathds{1}^{\otimes 2}$, but still encodes a maximally positive 4-point Ising interaction. It was already noted in \cite{yeo2006teleportation} that using this state, one can generate the 16 orthogonal basis vectors $\ket{\chi^{ij}_4} = \sigma^i \otimes \sigma^j \otimes \mathds{1}^{\otimes 2} \ket{\chi^{00}_4}$. This means that using the state $\ket{\chi^{00}_4}$, Alice can send 4 classical bits by sending 2 qubits to Bob. This works because the two minus signs in the definition of $\ket{\chi^{00}_4}$ break the symmetry of the state, so that it is no longer symmetric under two spin flips. 

Can this technique then be extended to higher $n$? The goal is to take an $n$-coupled state $\ket{\psi_n^1}$ and add phases such that the resulting states are orthogonal even after an even number of spin flips. That means that exactly one quarter ($2^n/8$) of all terms in the superposition should get a minus sign, and that none of the terms with a minus sign should be related to each other through spin flips on the first $n/2$ qubits. When that's the case, then upon an even number of local spin flips, you end up with a state that has $2^n/8$ minus signs, and none of these coincide with those before the spin flips. The inner product then involves $2^n/4$ minus signs, and thus leads to orthogonality.

Terms that obey this are of the form $\ket{a b c \neg a \neg b \neg c}$, so one can construct a 6-qubit state by subtracting the terms that should get a minus sign twice from a 6-coupled state:

\begin{align*}
    \ket{\chi^{00}_6} = \ket{\psi_6^1} - \frac{1}{8}\Big( &\ket{000 111} + \ket{011 100} + \ket{101 010} + \ket{110 001} + \\
                                & \ket{001 110} + \ket{010 101} + \ket{100 011} + \ket{111 000}\Big)
\end{align*}

It can then easily be verified (code available from \cite{nCoupCode}) that the 64 states $\ket{\chi^{ijk}_6} = \sigma^i \otimes \sigma^j \otimes \sigma^k \otimes \mathds{1}^{\otimes 3} \ket{\chi^{00}_6}$ are all mutually orthogonal, so Alice can send 6 classical bits by sending 3 qubits to Bob. One security advantage of this protocol compared to traditional Bell-pair based dense coding, is that Bob can only read the message if he has access to all 6 qubits, whereas in the traditional protocol, Bob can read two classical bits from each Bell pair separately. This new protocol is thus more robust to eavesdroppers who intercept the qubits. 

The state $\ket{\psi_8}$ contains 128 terms, so to perform the same trick a set of 32 states has to get a phase of $-1$. These again have to be of the form $\ket{a b c d \neg a \neg b \neg c \neg d}$. However, the four free variables $a, b, c, d$ lead to $2^4=16$ possible different states that can get a minus sign, which is too few. Therefore, this trick does not work for $n=8$. In general, to create a state for $n$-qubit dense coding with the above protocol, one needs an $n$-coupled state where $2^n/8$ of the states receive a negative phase, and there are $2^{n/2}$ candidate states to receive this phase. This is indeed only possible for $n \leq 6$.

\paragraph{Dense coding without shared dictionary}

The above protocol depends on Alice and Bob sharing a dictionary that links measurement outcomes in the $n$-coupled basis to bitstrings. In this paragraph, we outline how by using $n$ qubits in an $n$-coupled state, you can send $n$ classical bits by transferring $n-1$ qubits without a shared dictionary. Note that this is a much less dense coding. In fact, the channel capacity approaches that of a classical channel as $n$ increases. Furthermore, the communication channel is only quasi-secure, in the sense that an eavesdropper can learn some bits of information from the signal. However, it has the advantage that the receiver only needs to protect a single qubit, regardless of how much data is sent over the channel. In contrast, the traditional dense coding protocol requires the receiver to protect a number of qubits that scales linearly with the number of bits to be transferred. The alternative protocol presented here might be beneficial in centralised quantum communication designs, where a central sending entity carries the responsibility for the majority of the qubits, and the receivers only need to carry around a single one. This potential for increased centralisation comes at the cost of robustness, however, since the channel is destroyed when the receiver loses their qubit.

Let Alice and Bob share the $n$-coupled state $U\ket{0}^{\otimes n} = \ket{\psi_n^+}_{A_1\ldots A_{n-1} B}$, of which Bob has access to the $n$th qubit, and Alice to all other $n-1$ qubits. In the corresponding Schmidt decomposition, this state takes the form:
\begin{align}
    \ket{\psi_n^+}_{A_1\ldots A_{n-1} B} = \ket{\psi_{n-1}^+}_{A_1\ldots A_{n-1}}\ket{1}_B +  \ket{\psi_{n-1}^-}_{A_1\ldots A_{n-1}}\ket{0}_B\\
\end{align}

Now, Alice has to find an encoding $\mathcal{E}: \{0,1\}^n \to SU(2^{n-1})$ that sends each $n$-bit message to a particular local operation on her $n-1$ qubits, such that after sending her $n-1$ qubits to Bob, Bob can retrieve the message upon applying $U^\dagger$. Proposition \ref{prop:local_generation} implies that such an encoding always exists.

The $n=2$ case reduces to the traditional dense coding protocol, so we outline here the procedure for $n=3$. Let Alice and Bob share the state $\ket{\phi}$, where Alice has the first two qubits, and Bob the third, so that
\begin{align}
    \ket{\phi_{A_1 A_2 B}} = \frac{1}{2}\left( \ket{111} + \ket{001} + \ket{010} + \ket{100} \right)
\end{align}

Alice then needs to find an encoding $\mathcal{E}: \{0,1\}^3 \to SU(2^2)$ that assigns a unique unitary operation on two qubits to each message of 3 classical bits. Let the encoding be defined according to the recipe implied by Proposition \ref{prop:local_generation}:

\begin{align}
    \mathcal{E} (000) &= \mathds{1} \otimes \mathds{1}\\
    \mathcal{E} (001) &= Z \otimes \mathds{1}\\
    \mathcal{E} (010) &= Z \otimes Z\\
    \mathcal{E} (011) &= \mathds{1} \otimes Z\\
    \mathcal{E} (100) &= X \otimes \mathds{1}\\
    \mathcal{E} (101) &= (XZ) \otimes \mathds{1}\\
    \mathcal{E} (110) &= (XZ) \otimes Z\\
    \mathcal{E} (111) &= X \otimes Z
\end{align}

Alice then encodes her message and creates the state $(\mathcal{E} \otimes \mathds{1}) \ket{\phi}$. She then sends her two qubits to Bob, who applies $U^\dagger$. Bob then measures all three qubits in the computational basis, and assuming perfect encoding and transmission, measures Alice's message with probability one.

Similarly, for $n=4$, one can define an encoding $\mathcal{E}: \{0,1\}^4 \to SU(2^3)$ as follows:

\begin{align}
    \mathcal{E} (0000) &= \mathds{1} \otimes \mathds{1} \otimes \mathds{1}  &&     \mathcal{E} (1000) = X \otimes \mathds{1} \otimes \mathds{1}\\
    \mathcal{E} (0001) &= Z \otimes \mathds{1} \otimes \mathds{1}           &&     \mathcal{E} (1001) = (XZ) \otimes \mathds{1} \otimes \mathds{1}\\
    \mathcal{E} (0010) &= Z \otimes Z  \otimes \mathds{1}                   &&     \mathcal{E} (1010) = (XZ) \otimes Z \otimes \mathds{1}\\
    \mathcal{E} (0011) &= \mathds{1} \otimes Z  \otimes \mathds{1}          &&     \mathcal{E} (1011) = X \otimes Z \otimes \mathds{1}\\
    \mathcal{E} (0100) &= Z \otimes Z \otimes Z                             &&     \mathcal{E} (1100) = (XZ) \otimes Z \otimes Z\\
    \mathcal{E} (0101) &= \mathds{1} \otimes Z  \otimes Z                   &&     \mathcal{E} (1101) = X \otimes Z \otimes Z\\
    \mathcal{E} (0110) &= \mathds{1} \otimes \mathds{1} \otimes Z           &&     \mathcal{E} (1110) = X  \otimes \mathds{1} \otimes Z\\
    \mathcal{E} (0111) &= Z \otimes \mathds{1} \otimes Z                    &&     \mathcal{E} (1111) = (XZ) \otimes \mathds{1} \otimes Z
\end{align}
Upon applying $U^\dagger$, Bob can retrieve Alice's message with probability one. 

Now, as already mentioned, as $n$ grows, the capacity of this channel approaches that of a classical channel (though it is always strictly larger). However, for any $n$, it is still quasi-secure, in the sense that anyone who intercepts the qubits that Alice sends to Bob, is left with the mixed state:
\begin{align}
    \rho_{A_1 A_2} &= \tr_B \rho_{A_1 A_2 B} = \frac{1}{2} \left( \ket{\psi_{n-1}^i} \bra{\psi_{n-1}^i} + \ket{\psi_{n-1}^j} \bra{\psi_{n-1}^j} \right)    
\end{align}
where $i$ and $j$ depend on the message that Alice sent and the encoding used. This is a mixed state, so it is not possible to retrieve the joint state with certainty. However, it is not maximally mixed over the whole computational basis, which means that by performing a measurement in the $n$-coupled basis, an eavesdropper Eve can measure the state to be in $\ket{\psi_{n-1}^i}$ or in $\ket{\psi_{n-1}^j}$ with probability $\frac{1}{2}$ each. However, actually retrieving the message from the intercepted qubits is only possible upon applying $U^\dagger$ to the combined system, which Eve does not have access to.

This protocol thus allows for quasi-secure quantum communication and might offer an advantage over the Bell-pair based approach in situations where there is an asymmetry in the ability to store qubits among the two parties. Note also that the protocol reduces to the Bell-pair method for $n=2$, so this approach can be seen as a different generalisation of the traditional dense coding protocol.

\subsection{Encoupled states are stabiliser codes}

Entanglement is also commonly used in quantum error correction. The central idea is to spread the information in a single logical qubit over a collection of multiple physical qubits, in such a way that certain errors can be detected, and possibly corrected. Note that the encoupled states represent an obvious way to `spread out' information over a number of qubits, since strong synergy corresponds to a delocalisation of the information. One common way to create and categorise quantum error correcting codes is through the stabiliser group that generates them. 

Let $P_n$ be the Pauli group over $n$ qubits, that is, $P_n = \{\pm 1, \pm i\} \times \{\mathds{1}, X, Y, Z\}^{\otimes n}$. Let $\ket{\psi}$ be any quantum state of at least $n$ qubits. Then an operator $S \in P_n$ is said to \textit{stabilise} $\ket{\psi}$ iff $S \ket{\psi} = \ket{\psi}$. The \textit{stabiliser group} of $\ket{\psi}$ is then defined as the set of all operators that stabilise $\ket{\psi}$. Note that this group is necessarily Abelian. Given a code $\mathcal{C}$ such that $C(\ket{0}) = \ket{\overline{0}}$ and $C(\ket{1}) = \ket{\overline{1}}$, the stabiliser group of $\mathcal{C}$ is the set of all operators that stabilise all vectors in the space spanned by $\ket{\overline{0}}$ and $\ket{\overline{1}}$. This group is interesting, because given an error $E \in P_n$, the error $E$ can be detected if and only if $E$ does not commute with all elements of the stabiliser group. 

Now consider using the encoupled states as codewords. For example, set $\ket{\overline{0}} = \ket{\psi_n^+}$ and $\ket{\overline{1}} = \ket{\psi_n^-}$ for $n>2$. Now the stabiliser group for this code is easily seen to be generated by the set $S$ of all operators that involve an even number of $X$-operators. No $Z$ or $Y$ operator can appear, since those would introduce phases into just some terms in the superposition, and therefore don't stabilise the states. Define the \textit{syndrome} of an error $E_k$ relative to a code $\mathcal{C}$ as a vector $\gamma^k$ of length $\dim(\mathcal{C})$ such that $\gamma^k_i = 1$ iff $[E_k, S_i]\neq 0$ and $\gamma^k_i = 0$ otherwise, where $S_i \in \mathcal{S}$, the set of generators of the stabiliser group. A code can detect an error $E_k$ for which $\gamma^k \neq \vec{0}$, and correct errors that have a unique syndrome.

Clearly, all $X$ operators commute with $\mathcal{S}$, so $X$-errors are not detectable. In contrast, errors that comprise a single $Z_i$ operator are detectable and correctable. To see this, note that for any two $Z$-errors $Z_k$ and $Z_l$, an element $S_i \in \mathcal{S}$ can be found such that $\gamma_i^k = [Z_k, S_i] \neq 0$ and $\gamma_i^l = [Z_l, S_i] = 0$, namely $S_i=X_k X_p$ where $p\neq l$ (such an element exists if and only if $n>2$). Therefore, every $Z$-error has a unique syndrome, and is thus detectable and correctable. This code makes the implementation of logical gates using physical gates very straightforward: $\overline{X}$ is implemented by applying an $X$ to any of the qubits, and $\overline{Z}$ is implemented by a parity check measurement on all qubits. 

To create a more powerful code, we need to expand the stabiliser group by choosing a different codespace. If we keep $\ket{\overline{0}}=\ket{\psi_n^+}$ but now choose $\ket{\overline{1}}=Z_n \ket{\psi_n^+}$, then none of the operators in the stabiliser group contain an $X_n$, so $[Z_n, S_i]=0$ for all $S_i \in S$, which means that $Z_n$ errors are undetectable, which is obviously not an improvement. Adding another phase $\ket{\overline{1}}=Z_{n-1} \otimes Z_n \ket{\psi_n^+}$ results in stabilisers that contain either both $X_{n-1}$ and $X_n$, or neither, so that $[Z_{n-1}, S_i] = [Z_n, S_i]$ for all $S_i \in S$, which means that the $Z_{n-1}$ and $Z_n$ errors are detectable, but not correctable. Adding a third phase $\ket{\overline{1}}=Z_{n-2} \otimes Z_{n-1} \otimes Z_n \ket{\psi_n^+}$ results in stabilisers that contain $X_{n-2} X_{n-1}$ or $X_{n-1} X_{n}$ or $X_{n-2} X_{n}$, or neither of these three. Therefore, $Z_{n-2}$, $Z_{n-1}$, and $Z_{n}$ errors are correctable. However, note that the requirement of an even total number of $X$ operators means that for $n < 6$, either none of the $X_{<n-2}$ operators appear, or all of them do, leading to $Z_{<n-2}$ errors with identical syndromes. Therefore, only for $n\geq 6$ are all $Z$-errors correctable under this code. This is not yet an improvement over the previous code, but note that the operator $\pm \bigotimes_{i=1}^n Z_i$ is now also a generator of the stabiliser group (with a $+$ for even $n$, and a $-$ for odd $n$). This means that there is now a unique syndrome, namely the one with only a 1 corresponding to the operator $\bigotimes_{i=1}^n Z_i$, that reveals that \textit{some} $X$-error has occurred. Since every $X$ error has the same syndrome, it is not possible to correct the error, but it is possible to detect it, while also being able to correct every $Z$-error, which is an improvement over the previous code that was only able to correct $Z$-errors. The implementation of logical gates is less straightforward, however, since $\overline{X}$ can now implemented as $Z_{n-2} \otimes Z_{n-1} \otimes Z_n $, but it is not immediately clear how to construct $\overline{Z}$.

Note that the smallest code that can correct both kinds of errors---the 5-qubit code---is based on the following encoding:
\begin{align}
    |0_{\rm {L}}\rangle ={\frac {1}{4}}[|00000\rangle +|10010\rangle +|01001\rangle +|10100\rangle +|01010\rangle -|11011\rangle -|00110\rangle -|11000\rangle \nonumber\\
     -|11101\rangle -|00011\rangle -|11110\rangle -|01111\rangle -|10001\rangle -|01100\rangle -|10111\rangle +|00101\rangle ]\\
    |1_{\rm {L}}\rangle ={\frac {1}{4}}[|11111\rangle +|01101\rangle +|10110\rangle +|01011\rangle +|10101\rangle -|00100\rangle -|11001\rangle -|00111\rangle \nonumber\\
    -|00010\rangle -|11100\rangle -|00001\rangle -|10000\rangle -|01110\rangle -|10011\rangle -|01000\rangle +|11010\rangle ]
\end{align}
Both these codewords are parity-sorted encoupled states. However, they do not appear in the $n$-coupled basis as generated above: the phases of each term are such that the symmetry is broken. This suggests further codes might be constructed from $n$-coupled states by breaking the symmetry in the phases, but this is left for future work.

\section{Discussion}

In this manuscript, we introduced a new class of $n$-qubit states, referred to as $n$-coupled states because their computational measurement statistics contain up to $n$-point Ising couplings. Their entanglement structure inspired new quantum information protocols by generalising Bell-pair based approaches to multipartite states. Two new superdense coding protocols were shown to offer extra protection against eavesdropper attacks relative to Bell-pair based approaches. Stabiliser codes based on the $n$-coupled states allowed for the detection and correction of $Z$-errors, and in some cases also $X$-errors.

Further hints that the $n$-coupled states contain useful entanglement comes from the literature, where some special cases of $n$-coupled states have appeared previously. For example, the Bell basis is simply the 2-coupled basis. The authors of \cite{jaffali2023maximally} were led by a numerical study to the 3-coupled state $\ket{\psi_3^+}$, as it was found to maximally violate Bell-like inequalities and correspond holographically to a non-BPS black hole \cite{kallosh2006strings,duff2008black}. The 4-coupled state $\ket{\psi_4^+}$ appears as the state $G_{abcd}$ from \cite{verstraete2002four} upon setting $d=c=0$, where it was found to maximise the hyperdeterminant. Beyond these, parity-sorted encoupled states that correspond to $n$-coupled states with added phases have appeared throughout the literature. For example, a 4-qubit encoupled state was used in \cite{yeo2006teleportation} in a new teleportation protocol, and the 5-qubit \textit{perfect} code is a 5-qubit encoupled state with symmetry-breaking phases. 
To further relate the $n$-coupled states to other entangled states, it would be interesting to understand their relationship to graph states. Stabiliser states and graph states are related through local Clifford operations \cite{grassl2002graphs,schlingemann2001stabilizer}, but it is not immediately clear how to relate the $n$-coupled states to graph states in this way. Another interesting direction for future work would be to define and analyse the more general class of encoupled qudit states, which could be straightforwardly defined by using the generalisation of classical Ising interactions to categorical variables \cite{beentjes2020higher,jansma2023higher}. 

It is not yet fully clear how higher-order Ising interactions in the measurement statistics are related to multipartite entanglement. A higher-order interaction describes a higher-order dependency in the measurement data that cannot be decomposed into lower-order quantities. For example, when there is a nonzero 3-point interaction $J_{ABC}$, then the joint probability $P(A, B, C) \neq P(A)P(B)P(C)$ but $P(A, B, C) \neq P(A, B)P(C)$ as well (and similarly for all permutations of the three variables). Analogously, if a quantum state is to be genuinely multipartite entangled across a system $S$, then we might demand that the state is not separable with respect to any partition of $S$, that is, there is no partition $\sigma(S)$ such that $\ket{\psi_{S}} = \bigotimes_{t \in \sigma(S)} \ket{\psi_t}$, except for the one-element partition. The same separability constraints then hold for the density operator. In both the classical and the quantum case, these factorisation conditions thus capture the intuition that the higher-order structure in a system is the extent to which the marginals of partitions of the system do not capture the joint distribution. Still, higher-order measurement statistics are not the same as entanglement, as an $n$-qubit GHZ state has vanishing $n$-point interactions for odd $n$, but is famously entangled.

Finally, the encoupled states are characterised by their measurement statistics. This suggests that these states might be useful in quantum machine learning, where the measurement statistics of quantum states are used to perform quantum-enhanced machine learning tasks. If a parametrized quantum circuit, for example, is trained on data from a classical Ising model with exponentially suppressed $n$-point interactions, convergence would amount to the circuit preparing an encoupled state.
\section*{Acknowledgements}
The author is grateful to Hans Briegel and Gemma De les Coves for insightful conversations on the topic of persistent entanglement, stabiliser states, and tensor decompositions. The author also thanks Elham Kashefi for pointing out the connection to graph and cluster states. Of further help were valuable discussions with Sam Kuypers and Albert Gasull on multipartite entanglement. The author finally thanks Bernd Sturmfels, Dmitrii Pavlov, and Maximilian Wiesmann for a number of helpful conversations on hyperdeterminants.

\appendix
\section{Proof of Proposition \ref{prop:local_generation} \label{proof:local_generation}}
\begin{proof}
    The whole basis $\mathcal{B}_n$ can be separated into two sectors: $\mathcal{B}^+$, containing superpositions of states that have even parity, and $\mathcal{B}^-$ containing those with odd parity. Consider the even-parity sector $\mathcal{B}^+$ first, and let $\ket{\psi_n^1}$ be the state where every term in the superposition has coefficient $+1$. With just local $Z_i$ operations, all of $\mathcal{B}^+$ can be generated, up to global phases. To show this, it suffices to show that one can generate $2^{n-1}$ mutually orthogonal vectors from $\ket{\psi_n^1}$. This can be done as follows. Consider two nonempty subsets of indices $S, T \subseteq \{1, \ldots, n-1\}$. Consider the expectation value of the operator $\left(\bigotimes_{i \in S}Z_i\right) \left(\bigotimes_{j \in T}Z_j\right)$ in the state $\ket{\psi_n^1}$. Writing $\ket{\psi_n^+}$ for $\ket{\psi_n^1}$ and $\ket{\psi_n^-}$ for $\ket{\psi_n^{2^n-1}}$, we have:
    {\small{
    \begin{align}
        \bra{\psi_n^1}\left(\bigotimes_{i \in S}Z_i\right) \left(\bigotimes_{j \in T}Z_j\right)\ket{\psi_n^1} &\propto \left(\bra{\psi_{n-1}^+}\bra{1} + \bra{\psi_{n-1}^-}\bra{0}\right) \left(\bigotimes_{i \in S}Z_i\right) \left(\bigotimes_{j \in T}Z_j\right) \left(\ket{\psi_{n-1}^+}\ket{1} + \ket{\psi_{n-1}^-}\ket{0}\right)\\
        &= \bra{\psi_{n-1}^+} \left(\bigotimes_{i \in S}Z_i\right) \left(\bigotimes_{j \in T}Z_j\right) \ket{\psi_{n-1}^+} + \bra{\psi_{n-1}^-} \left(\bigotimes_{i \in S}Z_i\right) \left(\bigotimes_{j \in T}Z_j\right) \ket{\psi_{n-1}^-}\\
        \intertext{Now note that $X_m \ket{\psi_n^+} = \ket{\psi_n^-}$ for any choice of $m$, so that}
        &= \bra{\psi_{n-1}^+} \left(\bigotimes_{i \in S}Z_i\right) \left(\bigotimes_{j \in T}Z_j\right)\ket{\psi_{n-1}^+} + \bra{\psi_{n-1}^+}X_m \left(\bigotimes_{i \in S}Z_i\right) \left(\bigotimes_{j \in T}Z_j\right) X_m \ket{\psi_{n-1}^+}\\
        &= \bra{\psi_{n-1}^+}\left(\bigotimes_{i \in S}Z_i\right) \left(\bigotimes_{j \in T}Z_j\right)  + X_m \left(\bigotimes_{i \in S}Z_i\right) \left(\bigotimes_{j \in T}Z_j\right) X_m \ket{\psi_{n-1}^+}\\
        \intertext{Now whenever $S \neq T$, it is possible to choose an $m$ that is either in $S$ or in $T$ but not in both, that is, choose $m \in (S\setminus (S \cap T)) \cup (T\setminus (S \cap T))$. Then, since $[X_i, Z_j]= 0$ iff $i \neq j$, one can pull the $X_m$ through all but one of the terms in the tensor products. Finally, note that $X_m Z_m X_m = - Z_m$, so that}
        &= \bra{\psi_{n-1}^+} \left(\bigotimes_{i \in S}Z_i\right) \left(\bigotimes_{j \in T}Z_j\right)   - \left(\bigotimes_{i \in S}Z_i\right) \left(\bigotimes_{j \in T}Z_j\right)  \ket{\psi_{n-1}^+} = 0
    \end{align}}}
    Therefore, each unique set of $Z_i$'s on $n-1$ qubits generates a new orthonormal vector. Since there are $2^{n-1}$ such sets, this generates $2^{n-1}$ mutually orthogonal vectors, i.e. an orthonormal basis for $\mathcal{B}^+$. Now, by applying a single $X_i$ to the resulting states in $\mathcal{B}^+$, the parity of every term in the superposition is flipped, which means that you can generate the whole basis $\mathcal{B}^-$ in the same way: orthogonality with $\mathcal{B}^+$ is by definition, and mutual orthogonality is inherited because the exact same reasoning as above holds. Therefore, the whole basis $\mathcal{B}_n$ can be generated from $\ket{\psi_n^1}$ by applying local Pauli operators on the first $n-1$ qubits.
\end{proof}

\section{Proof of Proposition \ref{prop:ncoupled_schmidt} \label{proof:ncoupled_schmidt}}
\begin{proof}
    Consider a partitioning of the $n$ qubits into two subsystems of size $m$ and $n-m$, respectively. Since the $n$-coupled states are fully symmetric under qubit permutation, one can take the first subsystem to be the first $m$ qubits, without loss of generality. Note that any term with an even number of ones can be constructed by composing two systems with an even number of ones, or two systems with an odd number of ones. The distributive property of the tensor product then allows us to write $\ket{\psi_n^+}$ as:
    \begin{align}
        \ket{\psi_n^+} = \frac{\sqrt{2}}{2^{n/2}} \left( \sum_{s \in S^{m}_e}\ket{s} \otimes \sum_{t \in S^{n-m}_e} \ket{t} + \sum_{s \in S^{m}_o}\ket{s} \otimes \sum_{t \in S^{n-m}_o} \ket{t} \right)
    \end{align}
    Now, we can directly insert Definition \ref{def:ncoupled_state}, and rewrite this as:
    \begin{align}
        \ket{\psi_n^+} = \frac{1}{\sqrt{2}} \left( \ket{\psi_m^+} \otimes\ket{\psi_{n-m}^+} + \ket{\psi_m^-} \otimes\ket{\psi_{n-m}^-}\right)
    \end{align}

    To write down a similar expression for decompositions of $\ket{\psi_n^-}$, note that states with an odd number of ones can only be constructed by composing a system with an even number of ones, and one with an odd number of ones. That means that $\ket{\psi_n^-}$ can be written as:
    \begin{align}
        \ket{\psi_n^-} = \frac{\sqrt{2}}{2^{n/2}} \left( \sum_{s \in S^{m}_e}\ket{s} \otimes \sum_{t \in S^{n-m}_o} \ket{t} + \sum_{s \in S^{m}_o}\ket{s} \otimes \sum_{t \in S^{n-m}_e} \ket{t} \right)
    \end{align}
    As before, this reduces to 
    \begin{align}
        \ket{\psi_n^-} = \frac{1}{\sqrt{2}} \left( \ket{\psi_m^+} \otimes\ket{\psi_{n-m}^-} + \ket{\psi_m^-} \otimes\ket{\psi_{n-m}^+}\right)
    \end{align}
    Since the $\ket{\psi_n^\pm}$ are part of an orthonormal basis for the Hilbert space of $n$ qubits, the above expressions are proper Schmidt decompositions. Finally, to show that it is in fact the minimal decomposition it suffices to show that $\ket{\psi_n^\pm}$ cannot be written as a product state. To see this, note that the density matrix can be written out as:
    \begin{align}
        \rho_n^+ &= \ket{\psi_n^+}\bra{\psi_n^+}\\
        &= \frac{1}{2} \Bigg( \ket{\psi_m^+} \bra{\psi_m^+} \otimes \ket{\psi_{n-m}^+} \bra{\psi_{n-m}^+} + \ket{\psi_m^-} \bra{\psi_m^-} \otimes \ket{\psi_{n-m}^-} \bra{\psi_{n-m}^-}\\
        &+ \ket{\psi_m^+} \bra{\psi_m^-} \otimes \ket{\psi_{n-m}^+} \bra{\psi_{n-m}^-} + \ket{\psi_m^-} \bra{\psi_m^+} \otimes \ket{\psi_{n-m}^-} \bra{\psi_{n-m}^+} \Bigg)
        \intertext{So that upon tracing out system $B$, we are only left with the diagonal terms:}
        \rho_{A}^+ &= \tr_B \rho_n^+ = \frac{1}{2} \left( \ket{\psi_m^+} \bra{\psi_m^+} + \ket{\psi_m^-} \bra{\psi_m^-} \right)
    \end{align}
    This operator has rank unequal to one, so $\ket{\psi_n^+}$ cannot be written as a product state. The proof is the same for $\ket{\psi_n^-}$, but with parity reversed. Therefore, the Schmidt decompositions of $\ket{\psi_n^\pm}$ are minimal. 
\end{proof}

\bibliographystyle{abbrv}
\bibliography{refs}

\end{document}